\newtheorem{thm}{Theorem}
\newtheorem{cor}[thm]{Corollary}
\newtheorem{lem}[thm]{Lemma}
\theoremstyle{definition}
\newtheorem{defi}[thm]{Definition}
\newcommand{\rs}{\mathrm{row space}}
\newcommand{\rk}{\mathrm{rk}}
\newcommand{\lift}{\mathrm{lift}}
\newcommand{\F}{\mathbb{F}}
\newcommand{\N}{\mathbb{N}}
\newcommand{\G}{\mathcal{G}_q(k,n)}
\newcommand{\Vvs}{\mathcal{V}}
\newcommand{\Uvs}{\mathcal{U}}
\newcommand{\C}{\mathcal{C}}
\title{A lower bound for constant dimension codes from multi-component lifted MRD codes}
\author{Anna-Lena Trautmann\\ Institute of Mathematics\\ University of Zurich, Switzerland}
\date{}
\begin{document}
 
\maketitle

%\abstract{}

\section{Introduction}

In this work we investigate unions of lifted MRD codes of a fixed dimension and minimum distance and derive an explicit formula for the cardinality of such codes. This will then imply a lower bound on the cardinality of constant dimension codes.

We will first repeat some known results needed in this section. In Section 2 we will explain the construction and derive the formula for the cardinality and the lower bound. We conclude in Section 3.

Let $\F_q$ be the finite field with $q$ elements. $\G$ denotes the set of all $k$-dimensional vector spaces over $\F_q^n$, called the Grassmannian. A constant dimension code is simply a subset of $\G$. A metric on $\G$ is the injection metric, given by
\[d_I(\Uvs, \Vvs)= k - \dim(\Uvs \cap \Vvs)\]
for any $\Uvs, \Vvs \in \G$. The minimum injection distance of a code $\C \subseteq \G$ is defined as
\[d_I(\C) = \min \{d_I(\Uvs,\Vvs)\mid \Uvs, \Vvs \in \C , \Uvs \neq \Vvs\}.\]
By $A_q(n,d,k)$ we denote the maximal size of a code $\C \subseteq \G$ with minimum injection distance $d$.

There is a complete theory of matrix codes with the rank distance, which can be used to construct constant dimension codes. We will now give a brief overview on the most important definitions and results of this topic.

Let $A,B \in \F_q^{m\times n}$ be two matrices of the same size. It holds that 
$$d_R:=\rk (A -B)$$ 
is a metric on  $\F_q^{m\times n}$, called the rank metric. A rank-metric code is simply a subset of $\F_q^{m\times n}$. The minimum distance is defined in the usual way.

The following two theorems can be found in \cite{ga85a}:

\begin{lem}\label{thm:mrd}
Let $m, n \geq d$ and $C\in \F_q^{m\times n}$ be a rank-metric code with minimum rank distance $d$. Then 
\[|C| \leq q^{\max(n,m)(\min(n,m)-d+1)} .\]
\end{lem}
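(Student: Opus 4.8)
The plan is to mimic the classical Singleton bound argument from Hamming-metric coding theory, replacing symbol deletion with row deletion. Since the rank of a matrix is invariant under transposition, and transposing every codeword of $C$ produces a rank-metric code in $\F_q^{n\times m}$ of the same cardinality and minimum distance, I may assume without loss of generality that $m \le n$. Then $\min(m,n)=m$ and $\max(m,n)=n$, so the target inequality reads $|C| \le q^{n(m-d+1)}$.

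First I would introduce the projection $\pi:\F_q^{m\times n}\to \F_q^{(m-d+1)\times n}$ that retains the first $m-d+1$ rows of a matrix and discards the remaining $d-1$ rows. The key step is to show that $\pi$ restricted to $C$ is injective. Suppose $A,B\in C$ satisfy $\pi(A)=\pi(B)$. Then $A-B$ has its first $m-d+1$ rows equal to zero, so every nonzero row of $A-B$ lies among the last $d-1$ rows. Because the rank of a matrix is at most its number of nonzero rows, this forces $\rk(A-B)\le d-1$. Were $A\neq B$, this would contradict the assumption that $C$ has minimum rank distance $d$; hence $A=B$, and $\pi$ is injective on $C$.

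Injectivity then yields $|C|=|\pi(C)|\le |\F_q^{(m-d+1)\times n}| = q^{(m-d+1)n}$, which is exactly the claimed bound once we substitute $m=\min(m,n)$ and $n=\max(m,n)$. The condition $m,n\ge d$ guarantees that $m-d+1\ge 1$, so the projected matrices are genuinely nonempty and the counting step is valid.

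I do not expect a real obstacle here: the argument is essentially forced once the row-deletion viewpoint is adopted. The only point needing care is the reduction to the case $m\le n$, where one must verify that transposition preserves both the cardinality of $C$ and its minimum rank distance — both being immediate consequences of $\rk(X)=\rk(X^\top)$.
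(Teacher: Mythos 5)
Your proof is correct. The paper itself offers no proof of this lemma --- it is quoted from Gabidulin \cite{ga85a} as a known result --- so there is nothing in the text to compare against; but your argument (reduce to $m\le n$ via transposition, then show that deleting the last $d-1$ rows is injective on $C$ because a nonzero codeword difference supported on $d-1$ rows would have rank at most $d-1$) is exactly the standard Singleton-type puncturing proof of this bound, and every step, including the use of $m,n\ge d$ to ensure $m-d+1\ge 1$ and the symmetry of the bound in $m$ and $n$ that justifies the transposition reduction, checks out.
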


\begin{lem}\label{thm1}
For any set of parameters $n, m\geq d \in \N$ and arbitrary field size there exist codes attaining the bound of Lemma \ref{thm:mrd}. These codes are called \emph{maximum rank distance (MRD) codes}. 
\end{lem}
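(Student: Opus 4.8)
The plan is to reproduce the classical Gabidulin construction via linearized polynomials. First I would reduce to the case $m \geq n$: transposition $A \mapsto A^{T}$ is a rank-preserving bijection between $\F_q^{m\times n}$ and $\F_q^{n\times m}$, hence an isometry for the rank metric, so a code meeting the bound for one matrix shape yields one for the other. I then set $N := \max(m,n) = m$ and note $\min(m,n)=n$, and fix an $\F_q$-basis of $\F_{q^N}$. This identifies $\F_{q^N}$ with $\F_q^{N}=\F_q^{m}$ as $\F_q$-vector spaces, so that each element of $\F_{q^N}$ becomes a column of length $m$.

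Next I would introduce the $\F_{q^N}$-vector space of linearized ($q$-)polynomials
\[ f(x) = \sum_{i=0}^{k-1} a_i\, x^{q^i}, \qquad a_i \in \F_{q^N}, \]
where I put $k := n - d + 1$ (which is at least $1$ since $n \geq d$). Each such $f$ is an $\F_q$-linear endomorphism of $\F_{q^N}$, and these polynomials form a space of $\F_{q^N}$-dimension $k$, hence of cardinality $q^{Nk} = q^{m(n-d+1)}$. Fixing $g_1, \dots, g_n \in \F_{q^N}$ that are $\F_q$-linearly independent (possible because $n \leq N$), I define the code $C$ to be the image of the map sending $f$ to the matrix $M_f \in \F_q^{m\times n}$ whose $j$-th column is $f(g_j)$ expressed in the chosen basis. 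This map is $\F_q$-linear, so $C$ is a rank-metric code, and I will show below that it is injective, giving $|C| = q^{m(n-d+1)}$, exactly the bound of Lemma \ref{thm:mrd}.

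Since $C$ is $\F_q$-linear, its minimum distance equals the minimum rank of a nonzero $M_f$. The $\F_q$-rank of $M_f$ is the $\F_q$-dimension of the column span, i.e. $\dim_{\F_q} f(V)$ where $V = \langle g_1,\dots,g_n\rangle_{\F_q}$ has dimension $n$; by rank--nullity this is at least $n - \dim_{\F_q}(\ker f)$. The key step, which I expect to be the main obstacle, is the bound $\dim_{\F_q}(\ker f) \leq k-1$ for nonzero $f$: viewed as an ordinary polynomial, $f$ has degree at most $q^{k-1}$, so at most $q^{k-1}$ roots in $\F_{q^N}$, and its root set is precisely the $\F_q$-subspace $\ker f$, whose dimension $r$ therefore satisfies $q^{r} \leq q^{k-1}$, i.e. $r \leq k-1$. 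Consequently every nonzero codeword has rank at least $n - (k-1) = d$; in particular $M_f \neq 0$ for $f \neq 0$, which gives the injectivity claimed above, and also $d_R(C) \geq d$. Finally, a strictly larger minimum distance would, by Lemma \ref{thm:mrd}, force a strictly smaller cardinality than $q^{m(n-d+1)}$, so the minimum distance is exactly $d$. Thus $C$ is an MRD code, and the construction works for every prime power $q$.
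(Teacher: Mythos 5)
Your proof is correct: the reduction to $m\geq n$ by transposition, the count $q^{Nk}=q^{m(n-d+1)}$, the kernel bound via the degree of the linearized polynomial, and the use of Lemma~\ref{thm:mrd} to pin the minimum distance at exactly $d$ are all sound. The paper itself gives no proof but cites Gabidulin's 1985 paper, and your argument is precisely the construction given there (Gabidulin codes via evaluation of $q$-polynomials at $\F_q$-linearly independent points), so you have in effect reconstructed the cited proof.
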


We will now explain how to use MRD codes for the construction of constant dimension codes.

\begin{defi}
For a given rank-metric code $C\in \F_q^{k\times n}$ the set
\[\lift(C) := \{\rs \left[\begin{array}{cc} I_{k} & A \end{array}\right] | A\in C\}\]
is called the \emph{lifting} of $C$. 
\end{defi}

\begin{lem}\cite{si08j}\label{lem:lift1}
If $C\in \F_q^{k\times n-k}$ is an MRD code of minimum rank distance $d$, then $\lift(C) \in \G$ is a constant dimension code with minimum injection distance $d$. Moreover,
\[|\lift(C)| = q^{(n-k)(k-d +1)} .\]
\end{lem}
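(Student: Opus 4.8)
The plan is to prove Lemma~\ref{lem:lift1} by establishing two things: first, that $\lift(C)$ is a set of $k$-dimensional subspaces of cardinality $|C|$, and second, that its minimum injection distance equals $d$. Let me sketch how I would prove this claim about lifted MRD codes.

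=== LaTeX proof proposal ===

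\begin{proof}[Proof sketch]
The plan is to prove the cardinality statement and the minimum-distance statement separately, and to deduce the distance by relating rank distance of matrices to dimensions of intersections of their lifts.

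First I would verify that $\lift(C)$ really is a constant dimension code in $\G$ with $|\lift(C)| = |C|$. For each $A \in C$ the matrix $\left[\begin{array}{cc} I_k & A \end{array}\right]$ has $k$ rows and, because the $I_k$ block guarantees these rows are linearly independent, its row space is a $k$-dimensional subspace of $\F_q^n$; hence $\lift(C) \subseteq \G$. The lifting map $A \mapsto \rs\left[\begin{array}{cc} I_k & A\end{array}\right]$ is injective: if two matrices $A, B \in C$ have the same row space, then since both generator matrices are already in reduced row echelon form with leading $I_k$, the reduced form is unique, forcing $A = B$. Therefore $|\lift(C)| = |C|$. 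Combining this with the MRD assumption and Lemma~\ref{thm:mrd} applied with parameters $m = k$ and the second dimension $n-k$ (so $\min = k$, $\max = n-k$ when $n - k \geq k$, and in general the formula gives the stated exponent), I get $|\lift(C)| = q^{(n-k)(k-d+1)}$.

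The core of the argument is the distance claim, and the key step is the identity relating rank distance to intersection dimension. For $A, B \in C$ let $\Uvs = \rs\left[\begin{array}{cc} I_k & A\end{array}\right]$ and $\Vvs = \rs\left[\begin{array}{cc} I_k & B\end{array}\right]$. I would compute $\dim(\Uvs + \Vvs)$ as the rank of the stacked $2k \times n$ matrix $\left[\begin{array}{cc} I_k & A \\ I_k & B \end{array}\right]$. Performing the block row operation that subtracts the top block from the bottom turns this into $\left[\begin{array}{cc} I_k & A \\ 0 & B - A\end{array}\right]$, whose rank is clearly $k + \rk(B-A)$. Then by the dimension formula $\dim(\Uvs \cap \Vvs) = \dim \Uvs + \dim \Vvs - \dim(\Uvs + \Vvs) = 2k - (k + \rk(B-A)) = k - \rk(B-A)$. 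Hence $d_I(\Uvs, \Vvs) = k - \dim(\Uvs \cap \Vvs) = \rk(B - A)$, so the injection distance between two lifted codewords is exactly the rank distance of the corresponding matrices.

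With this identity in hand, the minimum-distance statement is immediate: $d_I(\lift(C)) = \min_{A \neq B} \rk(B-A)$, which is precisely the minimum rank distance of $C$, namely $d$. I expect the main obstacle to be stating the row-reduction identity cleanly enough that the block manipulation and the rank count are rigorous rather than merely suggestive; the cardinality and injectivity parts are routine once the echelon-form uniqueness is invoked.
\end{proof}
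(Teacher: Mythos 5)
Your proof is correct, but note that the paper itself gives no proof of this lemma at all: it is quoted with the citation \cite{si08j}, so there is no internal argument to compare against. What you have written is essentially the standard proof from that reference: the row-space of $\left[\begin{array}{cc} I_k & A \end{array}\right]$ is $k$-dimensional, the lifting map is injective by uniqueness of reduced row echelon form, and the key identity $d_I(\Uvs,\Vvs) = \rk(B-A)$ follows from the block row reduction of the stacked matrix, which shows the injection distance of the lifted code equals the rank distance of $C$. All steps are sound. One small caveat: your parenthetical claim that ``in general the formula gives the stated exponent'' is not quite right. The MRD bound of Lemma \ref{thm:mrd} gives $|C| = q^{\max(k,\,n-k)(\min(k,\,n-k)-d+1)}$, which equals the stated $q^{(n-k)(k-d+1)}$ only when $n-k \geq k$, i.e.\ $n \geq 2k$; when $n - k < k$ the size is $q^{k(n-k-d+1)}$ instead. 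This is really an implicit hypothesis of the lemma as stated rather than a flaw in your argument, and the paper itself acknowledges the dichotomy later, in the proof of Theorem \ref{thm:multi}, where the component sizes are split into the cases $n - di \geq 2k$ and $n - di < 2k$. It would be cleaner for you to state the assumption $n \geq 2k$ explicitly rather than wave at the general case.
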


Note, that this construction corresponds to the Reed-Solomon-like construction from K\"otter and Kschischang \cite{ko08}. We do not want to explain this construction in detail here, but the interested reader is referred to \cite{ko08}.

\section{Multi-component lifted MRD codes}

Naturally, appending $0$-columns in front of all code elements does not change the minimum distance. Thus, if $C\in \F_q^{k\times n-k-l}$ is an MRD code of minimum rank distance $d$, then
\[\rs\left[\begin{array}{lll}0_{k\times l} & I_{k\times k} & M \end{array}\right] | M \in C \} \subseteq \G\]
is a constant dimension code with minimum injection distance $d$ and cardinality  $q^{(n-k-l)(k-d+1)}$. This fact can be used to construct even larger codes, which has also been observed by, among others, \cite{ga11,sk08u}. We will now give our own formulation of a multi-component construction and derive an exact formula for the cardinality of these codes, which we call the \emph{multi-component lifted MRD codes}. Note, that our construction differs from the one in \cite{sk08u}. Moreover, the construction of \cite{ga11} is more general and does not give an explicit formula for the size of such codes.

%Let $l=j\cdot k$ and construct the \emph{component codes} \[C_j=\{\textnormal{rs}\left[\begin{array}{lll}0_{k\times j\cdot k} & I_{k\times k} & M_{k\times n-(j+1)k} \end{array}\right]\}\] 
%Two elements of different component codes intersect pairwise only trivially. Hence
%\[C=\bigcup_{j=0}^{\lfloor \frac{n}{k}\rfloor -1} C_j\]
%is a $[n,N,2d,k]$-code with size $N=\sum |C_j|$.

\begin{thm}\label{thm:multi}
Let $C_j$ be some MRD code with minimum rank distance $d$ in $\F_q^{k\times n-k-jd}$ for $j=0,\dots, \lfloor \frac{n-k}{d} \rfloor$. Then   
\[\mathcal{C}_j=\{\textnormal{rs}\left[\begin{array}{lll}0_{k\times j\cdot d} & I_{k\times k} & M \end{array}\right] \mid M \in C_j\}\]
are called the component codes and the union
\[\mathcal{C}=\bigcup_{j=0}^{\lfloor \frac{n-k}{d}\rfloor} \mathcal{C}_j \]  
is an $[n,N ,d,k]_q$-code, where 
\[N=\sum_{i=0}^{\lfloor\frac{n-2k}{d}\rfloor} q^{(k-d+1)(n-k-d i)} + \sum_{i=\lfloor\frac{n-2k}{d}\rfloor+1}^{\lfloor\frac{n-k}{d}\rfloor} \lceil q^{k(n-k+1-d (i+1))} \rceil .\]
If $k=d$ and $n\equiv r \mod k$ (such that $0\leq r < k$), it holds that
\[N        %=\frac{q^n-q^k(q^{r}-1)-1}{q^k-1}
= \frac{q^n- q^{r+k}}{q^k-1}+1 = q^r(\frac{q^{n-r}-1}{q^k-1}+q^{-r} - 1).\]
\end{thm}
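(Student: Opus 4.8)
The plan is to establish three facts: the component codes $\C_j$ are pairwise disjoint, the union $\C$ has minimum injection distance exactly $d$, and the cardinalities add up to $N$. First I would record that each generator matrix $[\,0_{k\times jd}\mid I_k\mid M\,]$ is already in reduced row echelon form, with pivot columns exactly $jd+1,\dots,jd+k$. Since distinct values of $j$ produce distinct pivot-column sets (in particular distinct first pivots $jd+1$), the corresponding row spaces are distinct, so the $\C_j$ are pairwise disjoint and $|\C|=\sum_j|\C_j|$.

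For the distance, within a single component $\C_j$ the prepended zero columns do not affect rank distance, so Lemma~\ref{lem:lift1} gives minimum injection distance $d$. For two words $\Uvs=\rs[\,0_{k\times id}\mid I_k\mid M\,]\in\C_i$ and $\Vvs=\rs[\,0_{k\times jd}\mid I_k\mid M'\,]\in\C_j$ with $i<j$, I would bound $\dim(\Uvs\cap\Vvs)$ directly. Writing a vector $x\in\Uvs\cap\Vvs$ as $x=aG_{\Uvs}=bG_{\Vvs}$ in terms of the two generator matrices, the first $jd$ coordinates of $x$ vanish (forced by $\Vvs$), while from $\Uvs$ the coordinates $id+1,\dots,id+k$ equal $a$; comparing on the overlapping index range forces the first $\min\{k,(j-i)d\}$ entries of $a$ to be zero. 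As $(j-i)d\ge d$, this gives $\dim(\Uvs\cap\Vvs)\le k-\min\{k,(j-i)d\}\le k-d$, so $d_I(\Uvs,\Vvs)\ge(j-i)d\ge d$. Combined with the intra-component bound, $d_I(\C)=d$.

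Next, I would compute $|\C_j|=|C_j|$ from Lemmas~\ref{thm:mrd} and~\ref{thm1}, splitting on the size of $m_j:=n-k-jd$. When $m_j\ge k$ (equivalently $j\le\lfloor(n-2k)/d\rfloor$) we have $\min(m_j,k)=k$, giving $|C_j|=q^{(k-d+1)(n-k-jd)}$, the first sum. When $d\le m_j<k$ we instead get $|C_j|=q^{k(n-k+1-d(j+1))}$, the positive-exponent terms of the second sum. Finally, when $m_j<d$ a rank-distance-$d$ code in $\F_q^{k\times m_j}$ can contain only the zero matrix, so $|C_j|=1$; since the exponent $k(n-k+1-d(j+1))$ is then $\le 0$, the ceiling $\lceil\,\cdot\,\rceil$ returns exactly $1$, which is precisely why it appears. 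Summing over $0\le j\le\lfloor(n-k)/d\rfloor$ yields $N$. For the case $k=d$ I would then substitute and observe that $k-d+1=1$ and that the second sum reduces to a single index. Writing $n=tk+r$ with $t=\lfloor n/k\rfloor$, the first sum becomes the geometric series $\sum_{j=1}^{t-1}q^{n-jk}=q^{r+k}+q^{r+2k}+\cdots+q^{n-k}=\frac{q^n-q^{r+k}}{q^k-1}$, while the lone second-sum term has exponent $k(r+1-k)\le 0$ and hence contributes $1$; this gives $N=\frac{q^n-q^{r+k}}{q^k-1}+1$, and the equivalent form $q^r\bigl(\frac{q^{n-r}-1}{q^k-1}+q^{-r}-1\bigr)$ follows by a routine rearrangement.

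I expect the main obstacle to be the between-component distance estimate: getting the bookkeeping of pivot positions and the overlap of coordinate ranges exactly right, so that the forced vanishing of the first $\min\{k,(j-i)d\}$ entries of $a$ is justified cleanly in all cases. The cardinality is then largely an exercise in applying the MRD bound over the correct ranges, the only subtlety being the degenerate small-$m_j$ cases that the ceiling absorbs.
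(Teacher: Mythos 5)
Your proof is correct and follows essentially the same approach as the paper: inter-component distance from the shifted identity blocks, cardinalities from the MRD bound split according to whether $n-k-jd\geq k$, with the ceiling absorbing the degenerate small components, and a geometric series plus a single ceiling term for the $k=d$ case. You are in fact somewhat more careful than the paper in two spots---you explicitly verify disjointness of the components via pivot positions and bound the intersection for arbitrary pairs $i<j$ rather than only adjacent components---but these are refinements of the same argument, not a different route.
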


\begin{proof}
We will first prove the minimum distance. It is clear that the distance between any elements of the same component $\mathcal{C}_i$ is greater than or equal to $d$. Now let $U\in \mathcal{C}_i$ and $V\in \mathcal{C}_{i+1}$. Since the identity blocks are shifted by $d$ positions, the maximal intersection is $(k-d)$-dimensional. Thus
\[d_I(U,V)= k-\dim (U\cap V) \geq k - (k-d) = d .\]
Let us now investigate the size of the code. The subspace component code is as large as the corresponding MRD code, thus
%\[|\mathcal{C}_0|= \left\{\begin{array}{ll} \lceil q^{(n-k)(k-d+1)}\rceil & \textnormal{ for } n\geq 2k \\  \lceil q^{k(n-k+1-d )}\rceil & \textnormal{ for } n < 2k \end{array} \right.\]
\[|\mathcal{C}_i|= \left\{\begin{array}{ll} \lceil q^{(n-k-d i)(k-d+1)}\rceil & \textnormal{ for } n-d i\geq 2k \\  \lceil q^{k(n-k+1-d (i+1))}\rceil & \textnormal{ for } n-d i < 2k \end{array} \right.\]
(follows from Lemma \ref{thm1}).
As $n-d i\geq 2k \Leftrightarrow i\leq \frac{n-2k}{d}$ and we look at codes with $n\geq 2k$, we proved the general formula. 
For $d=k$ it holds that
\[\sum_{i=0}^{\lfloor\frac{n}{k}\rfloor-2} q^{n-k(i+1)} + \sum_{i=\lfloor\frac{n}{k}\rfloor-1}^{\lfloor\frac{n}{k}\rfloor-1} \lceil q^{k(n-k+1-k (i+1))} \rceil = \sum_{i=0}^{\lfloor\frac{n}{k}\rfloor-2} q^{n-k(i+1)} + \lceil q^{k(n-k+1-k \lfloor\frac{n}{k}\rfloor)} \rceil \]
\[ = \frac{q^n-q^{k+n-k\lfloor \frac{n}{k}\rfloor}}{q^k-1}+ \lceil q^{k(n+1-k \lfloor\frac{n}{k}\rfloor)} \rceil .\]
Note, that $n-k\lfloor \frac{n}{k}\rfloor = r$, if $n\equiv r \mod k$. Thus, the expononent of the second summand is non-positive and the formula for $N$ follows.
%\\
\end{proof}

We can now obtain a general lower bound on the size of constant dimension codes:

\begin{cor}
 \[A_q(n,d,k) \geq \sum_{i=0}^{\lfloor\frac{n-2k}{d}\rfloor} q^{(k-d+1)(n-k-d i)} + \sum_{i=\lfloor\frac{n-2k}{d}\rfloor+1}^{\lfloor\frac{n-k}{d}\rfloor} \lceil q^{k(n-k+1-d (i+1))} \rceil .\]
\end{cor}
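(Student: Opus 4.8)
The plan is to prove the two claims—the minimum distance and the cardinality formula—separately, and then to treat the closed form for the case $k=d$ as a specialization. For the distance, I would first note that each component $\mathcal{C}_i$ is, up to the harmless prepended zero columns, the lifting of the MRD code $C_i$; hence by Lemma~\ref{lem:lift1} together with the observation that prepending zero columns leaves the distance unchanged, any two distinct subspaces inside a single $\mathcal{C}_i$ have injection distance exactly $d$. The substantive point is the cross-component bound. Given $U\in\mathcal{C}_i$ and $V\in\mathcal{C}_j$ with $i<j$, every vector of $U$ vanishes on the first $id$ coordinates and every vector of $V$ vanishes on the first $jd$ coordinates, so a vector in $U\cap V$ vanishes on columns $id+1,\dots,jd$. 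Reading off the generator matrix $[\,0_{k\times id}\mid I_k\mid M\,]$ of $U$, these vanishing conditions force the first $(j-i)d$ entries of its coefficient vector to be zero, so $U\cap V$ lies in the span of the remaining $k-(j-i)d$ rows. Thus $\dim(U\cap V)\le k-(j-i)d\le k-d$, giving injection distance at least $d$ for every pair, with adjacent components as the extreme case; combined with the within-component equality this shows the minimum distance is $d$.

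For the cardinality I would first argue that the union is disjoint. The generator $[\,0_{k\times id}\mid I_k\mid M\,]$ is already in reduced row echelon form, with pivot columns exactly $id+1,\dots,id+k$; since the reduced row echelon form of a subspace is unique, subspaces from different components have different pivot sets and are therefore distinct, and distinct $M$ likewise give distinct subspaces. Hence $|\mathcal{C}|=\sum_i|\mathcal{C}_i|=\sum_i|C_i|$. Each $|C_i|$ is then read off from Lemmas~\ref{thm:mrd} and~\ref{thm1} as $q^{\max(k,\,n-k-id)(\min(k,\,n-k-id)-d+1)}$, which equals $q^{(k-d+1)(n-k-di)}$ when $n-di\ge 2k$ and $q^{k(n-k+1-d(i+1))}$ when $n-di<2k$. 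The switch occurs exactly at $i=\lfloor\frac{n-2k}{d}\rfloor$, which produces the two sums in the statement. The ceilings matter only for the tail terms where the column count drops below $d$ and the exponent becomes non-positive, in which case the component degenerates to a single subspace and the ceiling of the corresponding power of $q$ (which is at most $1$) equals $1$.

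For the case $k=d$ I would substitute $d=k$ throughout. The index ranges collapse to $i=0,\dots,\lfloor\frac{n}{k}\rfloor-2$ for the first sum and to the single value $i=\lfloor\frac{n}{k}\rfloor-1$ for the second, whose exponent $k(r+1-k)$ is non-positive because $0\le r<k$, so that term contributes exactly $1$. The first sum $\sum_{i=0}^{\lfloor n/k\rfloor-2}q^{\,n-k(i+1)}$ is geometric, and summing it by the telescoping identity $S(q^k-1)=q^n-q^{n-km}$ with $m=\lfloor\frac{n}{k}\rfloor-1$, together with the fact that $n-km=r+k$, gives $\frac{q^n-q^{r+k}}{q^k-1}$. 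Adding the final $1$ yields the first closed form, and factoring out $q^r$ with a short rearrangement gives the second.

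I expect the main obstacle to be the careful bookkeeping at the boundary: verifying that the split point $\lfloor\frac{n-2k}{d}\rfloor$ is precisely where $\min(k,\,n-k-id)$ switches from $k$ to $n-k-id$, and confirming that the ceiling correctly collapses every degenerate tail component to a single subspace rather than over- or under-counting. The distance argument and the geometric summation are comparatively mechanical; the real risk lies in off-by-one errors in the index ranges and in the floor/ceiling manipulations relating $n-k\lfloor\frac{n}{k}\rfloor$ to the residue $r$.
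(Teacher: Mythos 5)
Your proof is correct and follows essentially the same route as the paper: lift each MRD component, bound cross-component intersections via the shifted identity blocks, sum the component sizes using the MRD cardinality formula with the split at $i=\lfloor\frac{n-2k}{d}\rfloor$ (with the ceiling absorbing the degenerate tail components), and evaluate the geometric sum in the case $k=d$. You are in fact somewhat more careful than the paper, which only checks adjacent components for the distance bound and tacitly assumes the union of the components is disjoint, whereas you verify both points explicitly (via the vanishing-coordinate argument for arbitrary $i<j$ and via uniqueness of the reduced row echelon form pivots).
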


\section{Conclusion}

Lifte rank metric codes are a prominent family of constant dimension codes. Taking unions of such codes of different length (with zero columns appended in front) achieves codes of the same minimum distance but larger cardinality. In this work we give an explicit formula for the size of such codes which gives rise to a lower bound on the size of constant dimension codes $\subseteq \G$ for any $q,k,n$ and minimum distance $d$.

\bibliography{/home/b/trautman/Dropbox/my_bib/network_coding_stuff}

\providecommand{\bysame}{\leavevmode\hbox to3em{\hrulefill}\thinspace}
\providecommand{\MR}{\relax\ifhmode\unskip\space\fi MR }
% \MRhref is called by the amsart/book/proc definition of \MR.
\providecommand{\MRhref}[2]{%
  \href{http://www.ams.org/mathscinet-getitem?mr=#1}{#2}
}
\providecommand{\href}[2]{#2}
\begin{thebibliography}{1}

\bibitem{ga85a}
{\`E}.~M. Gabidulin, \emph{Theory of codes with maximum rank distance},
  Problemy Peredachi Informatsii \textbf{21} (1985), no.~1, 3--16. \MR{MR791529
  (87f:94036)}

\bibitem{ga11}
E.~M. Gabidulin and N.~I. Pilipchuk, \emph{Multicomponent network coding},
  Proceedings of the Seventh International Workshop on Coding and Cryptography
  (WCC) 2011, 2011, pp.~443--452.

\bibitem{ko08}
R.~K{\"o}tter and F.R. Kschischang, \emph{Coding for errors and erasures in
  random network coding}, IEEE Transactions on Information Theory \textbf{54}
  (2008), no.~8, 3579--3591.

\bibitem{si08j}
D.~Silva, F.R. Kschischang, and R.~Koetter, \emph{A rank-metric approach to
  error control in random network coding}, Information Theory, IEEE
  Transactions on \textbf{54} (2008), no.~9, 3951 --3967.

\bibitem{sk08u}
V.~Skachek, \emph{Recursive code construction for random networks},
  arXiv:0806.3650, 2008.

\end{thebibliography}
\bibliographystyle{amsplain}

\end{document}